\newcommand{\RE}{\mathbb{R}}
\newcommand{\bdOmega}{\partial \kern+1pt \Omega} 
\newcommand{\SP}{\kern+1pt}             
\newcommand{\revFunk}[1]{{}^r \kern-1pt F_{#1}}
\DeclareMathOperator{\interior}{int}
\title{Software for the Thompson and Funk Polygonal Geometry}
\titlerunning{Software for the Thompson and Funk Polygonal Geometry}
\author{Hridhaan Banerjee}{ Thomas Jefferson High School for Science and Technology, Virginia, USA \and \url{~}}{hridhaan.s.banerjee@gmail.com}{}{}
\author{Carmen Isabel Day}{California State University Channel Islands, California, USA  \and \url{~}}{carmen.isabel.day@gmail.com}{}{}
\author{Auguste H. Gezalyan}{Department of Computer Science, University of Maryland, College Park, USA \and \url{~}}{octavo@umd.edu}{https://orcid.org/0000-0002-5704-312X}{}
\author{Olga Golovatskaia}{Mount Holyoke College, Massachusetts, USA\and \url{~}}{olga.golovatskaia@gmail.com}{}{}
\author{Megan Hunleth}{Montgomery Blair High School, Silver Spring, Maryland, USA  \and \url{~}}{megan@hunleth.com}{}{}
\author{Sarah Hwang}{University of Maryland, College Park, USA \and \url{~}}{shwang18@terpmail.umd.edu}{}{}
\author{Nithin Parepally}{Department of Computer Science, University of Maryland, College Park, USA \and \url{~}}{nparepa@terpmail.umd.edu}{}{}
\author{Lucy Wang}{Princeton University, New Jersey, USA \and \url{~}}{lucywangj@gmail.com}{}{}
 \author{David M. Mount}{Department of Computer Science, University of Maryland, College Park, Maryland, USA \and \url{https://www.cs.umd.edu/~mount/}}{mount@umd.edu}{https://orcid.org/0000-0002-3290-8932}{}
\authorrunning{Banerjee, Day, Hunleth, Hwang, Gezalyan, Golovatskaia, Parepally, Wang, and Mount}
\keywords{Thompson metric, Hilbert metric, Funk metric, balls}
\date{\today}
\begin{document}

\maketitle

\begin{abstract}
Metric spaces defined within convex polygons, such as the Thompson, Funk, reverse Funk, and Hilbert metrics, are subjects of recent exploration and study in computational geometry. This paper contributes an educational piece of software for understanding these unique geometries while also providing a tool to support their research. We provide dynamic software for manipulating the Funk, reverse Funk, and Thompson balls in convex polygonal domains. Additionally, we provide a visualization program for traversing the Hilbert polygonal geometry.

\end{abstract}

\section{Introduction}
Thompson, Funk, reverse Funk, and Hilbert are closely related distance measures that apply to points in the interior of a convex body (see Section~\ref{sec:defs} for definitions). The Hilbert metric has applications in convex approximation \cite{abdelkader2018delone,abdelkader2024convex,vernicos2021flag}, real analysis \cite{lemmens2014birkhoff}, linear algebra \cite{liverani1994generalization}, clustering \cite{nielsen2019clustering}, and machine learning \cite{vaneceksupport} and has been the subject of recent study for this reason \cite{nielsen2017balls,gezalyan2023delaunay, gezalyan2023voronoi, bumpus2023software, banerjee2024heine}. The Funk, reverse Funk, and Thompson metrics are comparatively unexplored but all share similarities with the Hilbert metric \cite{thompson1963certain,papadopoulos2014handbook}. 

In this paper, we present new dynamic software for manipulating the Funk, reverse Funk, and Thompson balls in convex polygonal domains. Additionally, we provide visualization software for traversing the Hilbert polygonal geometry. Our code is predominantly written in Javascript and is available at \url{https://github.com/nithin1527/funk-geo-visualizer}. To use the software, please go to \url{https://funk-geo-visualizer.vercel.app/}. For more information on the usage, please see the README on the GitHub repository or watch the accompanying video available at \url{https://www.youtube.com/watch?v=IeRIO25iK-o}.

\section{Definitions} \label{sec:defs}

Throughout, $\Omega$ refers to a convex polygon in $\RE^2$ with $m$ sides. Let $\bdOmega$ denote its boundary. When clear, we omit explicit reference to $\Omega$.

\begin{definition}[Funk weak metric]
Given two distinct points $p,q \in  \interior \Omega$ in $\RE^d$, let $q'$ denote the intersection of the ray from $p$ through $q$ with $\bdOmega$. Define the \emph{Funk weak metric} to be:
\[
    F_\Omega(p,q)
         ~ = ~ \ln \frac{\|p - q'\|}{\|q - q'\|},
\]
where $F_\Omega(p,q)=0$.

\end{definition}

The above definition is also sometimes called the \emph{forward Funk metric}. Note that the Funk weak metric is asymmetric. Its reverse, the $\emph{reverse Funk metric}$, is defined to be $rF_\Omega(p,q)=F_\Omega(q,p)$. Define the \emph{Funk ball} centered at a point $p$ of radius $r$, denoted $B_{F}(p,r)$, to be the set of points $q \in \interior \Omega$ such that $F_{\Omega}(p,q) \leq r$. The reverse Funk ball is defined analogously. These are scaled homotheties of $\Omega$ (see Figure~\ref{fig:balls}(b) and (c)). 

\begin{lemma} \label{lem:ffunk-ball}
The open forward Funk ball of radius $r$ around a point $p$ is the image of $\Omega $ under Euclidean homothety
around the point $p$ with dilation factor of $(1-e^{-r})$\cite{papadopoulos2014handbook}.
\end{lemma}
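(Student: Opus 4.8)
The plan is to reduce the statement to a one-dimensional computation along each ray emanating from $p$, and then to invoke the fact that $\Omega$ is star-shaped with respect to its interior point $p$. Fix a unit vector $u$ and consider the ray $\{p + tu : t > 0\}$. Since $p \in \interior \Omega$ and $\Omega$ is convex (hence star-shaped with respect to $p$), there is a unique $L(u) > 0$ with $p + tu \in \interior \Omega$ for $0 < t < L(u)$ and $p + L(u)\,u \in \bdOmega$; this boundary point is exactly the point $q'$ appearing in the definition of $F_\Omega$ for every $q = p + tu$ with $0 < t < L(u)$. Then $\|p - q'\| = L(u)$ and $\|q - q'\| = L(u) - t$.

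With this parametrization in hand, the defining inequality for the open ball unwinds by elementary algebra:
\[
F_\Omega(p,q) < r \iff \frac{L(u)}{L(u)-t} < e^r \iff L(u) < e^r\bigl(L(u)-t\bigr) \iff t < (1 - e^{-r})\,L(u),
\]
where the second step uses $L(u) - t > 0$. Hence the open forward Funk ball of radius $r$ about $p$ is precisely the set $\{\,p + tu : \|u\| = 1,\ 0 \le t < (1-e^{-r})\,L(u)\,\}$.

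Next I would identify this set as a homothety image. Let $h$ be the Euclidean homothety centered at $p$ with ratio $\lambda = 1 - e^{-r} \in (0,1)$ (valid since $r > 0$; the degenerate case $r = 0$ gives $\lambda = 0$ and the singleton $\{p\}$, consistent with $F_\Omega(p,p) = 0$), so $h(x) = p + \lambda(x - p)$ and $h^{-1}(q) = p + \tfrac{1}{\lambda}(q-p)$. A point $q = p + tu$ lies in $h(\interior \Omega)$ if and only if $h^{-1}(q) = p + (t/\lambda)\,u \in \interior \Omega$, which by the radial description of $\Omega$ above holds exactly when $t/\lambda < L(u)$, i.e. when $t < (1-e^{-r})\,L(u)$. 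Comparing with the description of the open Funk ball from the previous paragraph yields equality of the two sets, proving the lemma.

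The computation itself is routine; the only point requiring care is the bookkeeping at $\bdOmega$ — namely that $q'$ is well-defined and that the radial function $L(u)$ cleanly separates interior points from boundary points, even in directions where the ray from $p$ meets a vertex of $\Omega$. This is exactly where convexity of $\Omega$ together with $p \in \interior \Omega$ is used, and the same property guarantees that $h$ (being an affine bijection fixing $p$) carries $\interior \Omega$ onto $\interior h(\Omega)$, so that the "open ball / open homothety image" correspondence is exact rather than only up to boundary.
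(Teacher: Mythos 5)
Your proof is correct: the paper itself gives no argument for this lemma, simply citing \cite{papadopoulos2014handbook}, and your radial parametrization $q = p + tu$ with the equivalence $F_\Omega(p,q) < r \iff t < (1-e^{-r})L(u)$ is exactly the standard computation found in that reference. The only (harmless) wrinkle is the $r=0$ edge case, where the open ball is empty while $h(\interior\Omega)$ is $\{p\}$; for $r>0$ your identification of the open ball with the homothetic image of $\interior\Omega$ is exact.
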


\begin{lemma} \label{lem:rfunk-ball}
The open reverse Funk ball of radius $r$ around a point $p$ is the image of $\Omega $ under Euclidean homothety
around the point $p$ with dilation factor of $(e^{r}-1)$\cite{papadopoulos2014handbook}.
\end{lemma}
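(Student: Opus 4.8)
The plan is to establish the identity one ray at a time. By definition $B_{rF}(p,r)=\{q\in\interior\Omega : F_\Omega(q,p)\le r\}$, and the map named in the statement is the homothety $h(x)=p-(e^{r}-1)(x-p)$ centered at $p$ (its dilation magnitude is $e^{r}-1$, composed with a point reflection through $p$). Since $h$ carries each ray emanating from $p$ onto the opposite ray, it suffices to fix a unit direction $v$ and compare the two regions along $\{p+tv : t\ge 0\}$.

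First I would pin down the boundary point that controls $F_\Omega(q,p)$ when $q=p+tv$. That quantity is defined through the point $p'$ where the ray from $q$ through $p$ meets $\bdOmega$; this ray leaves $p$ in direction $-v$, so $p'$ is independent of $t$: it is the point $a\in\bdOmega$ of the form $a=p-sv$, where $s=\|p-a\|$ is the distance from $p$ to $\bdOmega$ in direction $-v$. Substituting into the definition gives $F_\Omega(q,p)=\ln\frac{\|q-a\|}{\|p-a\|}=\ln\frac{t+s}{s}$, so the constraint $F_\Omega(q,p)\le r$ is precisely $t\le(e^{r}-1)s$. Hence along direction $v$ the reverse Funk ball is the segment from $p$ of length $(e^{r}-1)s$.

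To finish I would match this against $h(\Omega)$. The boundary point $a=p-sv$ is carried by $h$ to $h(a)=p+(e^{r}-1)sv$, exactly the far endpoint of the segment just found; and because $p\in\interior\Omega$ and $\Omega$ is convex, the segments $[p,a]$ sweep out all of $\Omega$ as $v$ ranges over all unit directions, so $h$ sends $\Omega$ onto the union over $v$ of the segments from $p$ to $h(a)$. That union is exactly the set of per-ray segments computed above (with the strict endpoint for the open ball, reading $\Omega$ as the open polygon), which proves the equality. As a consistency check one can instead combine the relation $q\in B_{rF}(p,r)\iff p\in B_F(q,r)$ with Lemma~\ref{lem:ffunk-ball}: writing $p$ as a point of the homothety of $\Omega$ about $q$ with factor $1-e^{-r}$ and solving for $q$ yields $q\in p-(e^{r}-1)(\Omega-p)$ after a short computation.

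The one place that needs care --- and the main obstacle to reading the statement completely literally --- is that $e^{r}-1>1$ once $r>\ln 2$, so $h(\Omega)$ can protrude outside $\Omega$, whereas the reverse Funk ball lies in $\interior\Omega$ by definition. The resolution is that $rF_\Omega(p,\cdot)$ stays finite up to $\bdOmega$, so for such $r$ the ball is $h(\Omega)\cap\interior\Omega$; the per-ray argument is unchanged, one merely truncates each segment at $\bdOmega$. I expect this truncation and the open-versus-closed bookkeeping to be the only fussy points; the geometric content is the single computation $F_\Omega(q,p)=\ln\frac{t+s}{s}$.
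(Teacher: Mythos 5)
Your argument is correct, and in fact it does more than the paper, which states this lemma with only a citation to the Papadopoulos--Troyanov handbook and supplies no proof of its own. Your per-ray computation is the standard route: parametrizing $q=p+tv$, observing that the exit point $a=p-sv$ of the ray from $q$ through $p$ does not depend on $t$, and reducing the membership condition to $t\le(e^{r}-1)s$ is exactly the right calculation, and it dualizes correctly against Lemma~\ref{lem:ffunk-ball} as your consistency check confirms. Two of your ``fussy points'' are genuinely points where the paper's wording is imprecise, and you resolve both correctly: first, the homothety must have \emph{negative} ratio $-(e^{r}-1)$ (your $h(x)=p-(e^{r}-1)(x-p)$), since the extent of the ball in direction $v$ is governed by the distance to $\bdOmega$ in direction $-v$; the source states the ratio as $1-e^{r}$, and the paper's phrase ``dilation factor $(e^{r}-1)$'' only records its magnitude. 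Second, the image $h(\Omega)$ can indeed protrude outside $\Omega$ --- not only for $r>\ln 2$ but for any $r>0$ when $p$ is sufficiently off-center, since the condition is $(e^{r}-1)s_{-v}>s_{+v}$ for some direction $v$ --- and the ball is then $h(\Omega)\cap\interior\Omega$; the paper itself tacitly concedes this when it later describes the Funk balls as ``homotheties of $\Omega$ intersected with $\Omega$.'' So your write-up is a sound, self-contained proof of the corrected statement; the only edit I would suggest is weakening the $r>\ln 2$ remark to a sufficient condition rather than a threshold.
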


The \emph{Hilbert metric} can be defined as the average of the forward and reverse Funk metrics \cite{papadopoulos2014handbook}.

\begin{figure}[htbp]
    \centerline{\includegraphics[scale=0.6]{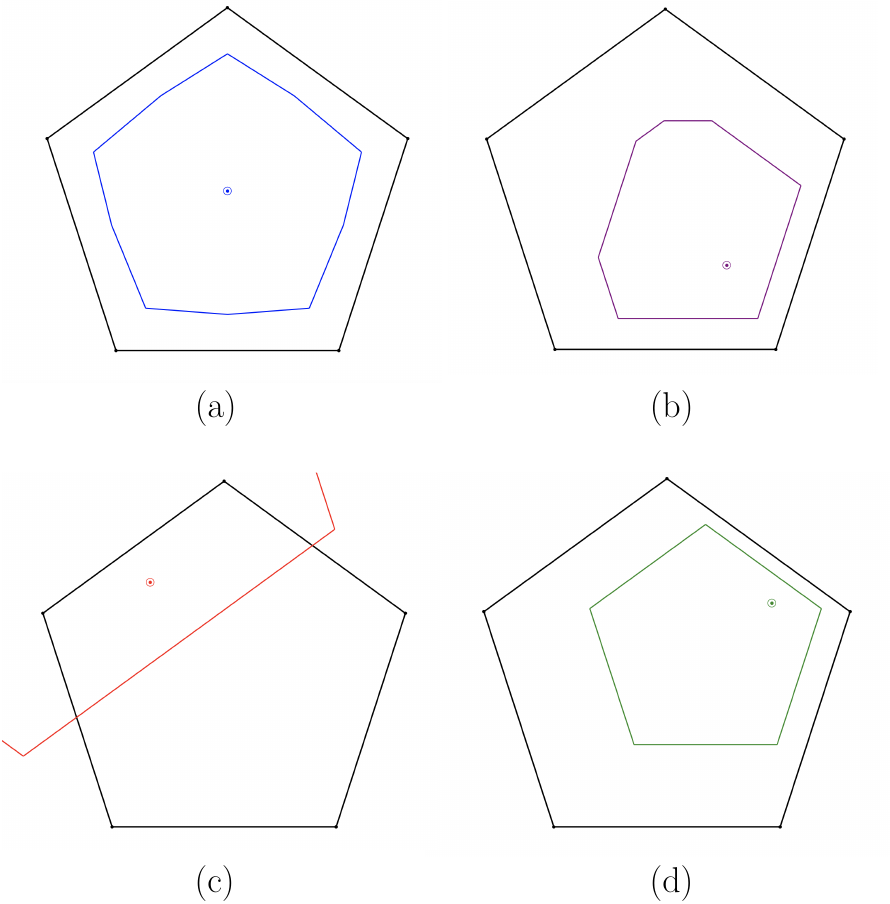}}
    \caption{The Hilbert (a), Funk (b), Reverse Funk (c), and Thompson (d) balls around a point} \label{fig:balls}
\end{figure}

\begin{definition}[Hilbert metric]
Given two distinct points $p,q$ in a convex polygonal $\Omega \subset \RE^d$ we define the \emph{Hilbert metric} to be:
\[
    H_\Omega(p,q)
         ~ = ~ \frac{1}{2}(F_\Omega(p,q) + rF_\Omega(p,q)),
\]
\end{definition}

Note that the Hilbert metric is a proper metric, obeying the identity of indiscernibles, symmetry, and the triangle inequality. Hilbert balls are polygons with at most $2m$ sides (see Figure~\ref{fig:balls}(a)). For a characterization of Hilbert balls see ``Balls in the Hilbert Polygonal Geometry'' by Nielsen and Shao \cite{nielsen2017balls}.

The notion of the Hilbert metric gave rise to the definition of another metric by A.C. Thompson in 1963 called the \emph{Thompson metric} \cite{thompson1963certain} on convex cones. The formulation of this metric can be generalized through the Funk metric.

\begin{definition}[Thompson metric]
Given two distinct points $p,q$ in a convex polygonal $\Omega\subset \RE^d$ we define the \emph{Thompson metric} to be:
\[
    T_\Omega(p,q)
         ~ = ~ \max (F_\Omega(p,q),rF_\Omega(p,q)),
\]
\end{definition}

Note that the identity of indiscernibles and triangle inequality clearly carry over from the Funk metric. The symmetry of the Thompson metric follows from the fact that it is the maximum of the forward and reverse of a weak metric. 

The Thompson ball (see Figure~\ref{fig:balls}(d)) is the intersection of the forward and reverse Funk balls. Since both these balls are convex (as they are homotheties of $\Omega$ intersected with $\Omega$\cite{papadopoulos2014handbook}), computing the Thompson ball reduces to calculating the intersection of two convex polygons. This can be done linearly in the complexity of the polygons. Hence, we get:

\begin{lemma} \label{lem:funk-ball-comp}
The Thompson ball around a point $p$ with radius $r$ has at most $2m$ sides and is the intersection of the forward and reverse Funk balls around $p$ with radius $r$.
\end{lemma}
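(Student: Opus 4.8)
The plan is to verify the three assertions in Lemma~\ref{lem:funk-ball-comp} in sequence, drawing on the two homothety lemmas (Lemmas~\ref{lem:ffunk-ball} and~\ref{lem:rfunk-ball}) and the elementary theory of convex polygon intersection. First I would establish that the Thompson ball $B_T(p,r)$ equals $B_F(p,r) \cap B_{rF}(p,r)$. This is immediate from unwinding the definition: a point $q \in \interior\Omega$ satisfies $T_\Omega(p,q) \le r$ if and only if $\max(F_\Omega(p,q), rF_\Omega(p,q)) \le r$, which holds exactly when both $F_\Omega(p,q) \le r$ and $rF_\Omega(p,q) \le r$, i.e. $q$ lies in both the forward and reverse Funk balls of radius $r$ about $p$. (One should be mildly careful about open versus closed balls, but the argument is identical in either convention.)

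Next I would argue convexity of each Funk ball and hence of the Thompson ball. By Lemmas~\ref{lem:ffunk-ball} and~\ref{lem:rfunk-ball}, each Funk ball is the image of $\Omega$ under a Euclidean homothety centered at $p$ (with factor $1-e^{-r}$ in the forward case and $e^r-1$ in the reverse case), intersected with $\interior\Omega$ when the homothety factor exceeds $1$ so that the image sticks out of $\Omega$. A homothetic copy of a convex set is convex, $\Omega$ itself is convex, and the intersection of convex sets is convex; thus both Funk balls are convex polygons, and therefore so is the Thompson ball.

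The remaining point is the bound of $2m$ on the number of sides and the linear-time computation. For the side count: each Funk ball is an intersection of two convex polygons, one of which is a translate-and-scale of $\Omega$ (with $m$ edges whose directions exactly match those of $\Omega$) and the other of which is $\Omega$ itself. Because the homothety preserves edge directions, the scaled copy and $\Omega$ have pairwise parallel edges, so their intersection is a convex polygon with at most $m$ edges (parallel supporting lines cannot both contribute). Hence each Funk ball has at most $m$ sides, and the Thompson ball, being the intersection of these two $m$-gons, has at most $2m$ sides; this is the standard bound for intersecting two convex polygons with $m_1$ and $m_2$ edges. The linear-time claim then follows from the classical fact that the intersection of two convex polygons with a total of $n$ vertices can be computed in $O(n)$ time (e.g.\ by the rotating-calipers / merge approach of O'Rourke--Chien--Olson--Naddor or Toussaint), applied to the two Funk balls, each of which is itself computed in $O(m)$ time as a homothety-plus-clip against $\Omega$.

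The main obstacle I anticipate is not any single deep step but rather getting the edge-direction bookkeeping exactly right: one must confirm that the homothety used to form a Funk ball genuinely preserves all edge directions of $\Omega$ (true, since a homothety is an orientation-preserving similarity with no rotational component) and then argue carefully that two convex $m$-gons with pairwise-parallel edge directions intersect in at most $m$ edges, so that the final $2m$ figure is tight rather than a loose $2m$ coming from two arbitrary $m$-gons intersecting in up to $2m$ edges. If one is content with the weaker-seeming but still correct statement "at most $2m$ sides," the parallelism refinement can be skipped and the bound follows purely from the generic convex-polygon-intersection count $m + m = 2m$; I would present it that way for brevity and remark on the parallelism only if a sharper bound is desired.
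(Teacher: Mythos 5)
Your proposal is correct and follows essentially the same route as the paper, which justifies the lemma in the preceding paragraph by observing that the Thompson ball is the intersection of the forward and reverse Funk balls, that each of these is convex because it is a homothety of $\Omega$ intersected with $\Omega$, and that the intersection of two convex polygons is computable in linear time. Your additional bookkeeping (the max-unwinding, the parallel-edge observation giving at most $m$ sides per Funk ball, and hence $2m$ for their intersection) simply fills in details the paper leaves implicit.
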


Thompson balls have an interesting property that sets them apart from many other metrics. 

\begin{lemma}Thompson balls are not pseudo-disks (see Figure~\ref{fig:ThompsonAndThompson}(a)).\end{lemma}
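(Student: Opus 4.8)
The plan is to disprove the pseudo-disk property directly, by producing two Thompson balls whose boundaries cross at least four times. Recall that a family of planar regions is a family of \emph{pseudo-disks} when the boundaries of any two members meet in at most two points; equivalently, for any two members $A,B$ the sets $A\setminus B$, $B\setminus A$ and $A\cap B$ are each connected. So it is enough to exhibit a convex polygon $\Omega$, two centres $p_1,p_2\in\interior\Omega$ and radii $r_1,r_2$ for which $\bd B_T(p_1,r_1)\cap\bd B_T(p_2,r_2)$ has (at least) four points, equivalently for which $B_T(p_1,r_1)\setminus B_T(p_2,r_2)$ is disconnected. Figure~\ref{fig:ThompsonAndThompson}(a) depicts such a configuration, so the plan is to record one explicitly and then verify the four crossings by intersecting the edges of the two polygons, a finite computation.

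The structural input is Lemma~\ref{lem:funk-ball-comp}: $B_T(p,r)=B_F(p,r)\cap B_{rF}(p,r)$, where $B_F(p,r)$ is a homothet of $\Omega$ about $p$ with factor $1-e^{-r}$ (Lemma~\ref{lem:ffunk-ball}) and $B_{rF}(p,r)$ is a homothet of $\Omega$ about $p$ with factor $e^{r}-1$ intersected with $\Omega$ (Lemma~\ref{lem:rfunk-ball}); in particular $B_T(p,r)$ is a convex polygon all of whose edges are parallel to edges of $\Omega$. Convexity by itself does not force the pseudo-disk property --- a long thin rectangle and a tall thin rectangle meeting in a plus-sign already have boundaries crossing four times --- so the claim really is that the Thompson balls of a suitable $\Omega$ realise such a ``transversally elongated'' pair. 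The point is that, unlike homothets of one fixed convex body (which always do form a pseudo-disk family), the \emph{shape} of $B_T(p,r)$ genuinely depends on $p$ and $r$: the two factors $1-e^{-r}$ and $e^{r}-1$ are unequal, the reverse-Funk homothety is oppositely oriented to the forward one, and the clipping of $B_{rF}(p,r)$ against $\bdOmega$ bites in different ways at different $p$. Exploiting this, I would pick $p_1$ so that $B_T(p_1,r_1)$ is elongated in one direction, $p_2$ so that $B_T(p_2,r_2)$ is elongated roughly transversally, with the two placed so that one ball protrudes from the other across two opposite sides --- the plus-sign situation --- and then confirm the four boundary crossings by direct edge intersection.

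The hard part is pinning down the configuration, and two natural first tries fail, which indicates where to look. Small radii are useless: as $r\to 0$ one has $B_T(p,r)=p+r\bigl((\Omega-p)\cap(p-\Omega)\bigr)+o(r)$, and $(\Omega-p)\cap(p-\Omega)$ is a centrally symmetric polygon whose dependence on $p$ is a strictly lower-order effect, dominated by the translation between the two centres (which must be of order $r$ for the order-$r$-sized balls to overlap at all), so two such balls can cross only twice. Over-symmetric domains also fail: for a square, every Thompson ball is an axis-parallel rectangle whose left and right (resp.\ bottom and top) edges move monotonically with $p_x$ (resp.\ $p_y$), so two such rectangles are never in the plus-sign position --- the Thompson balls of a square are in fact a family of pseudo-disks. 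Hence the counterexample needs a domain chosen with some care (a suitable triangle or quadrilateral suffices) together with a radius of order one, so that the $\bdOmega$-clipping of the reverse-Funk ball and the mismatch between the two dilation factors supply the required shape variation. With such a choice in hand --- which the software makes it easy to search for --- verifying that $\bd B_T(p_1,r_1)$ and $\bd B_T(p_2,r_2)$ meet in four points is a routine computation, and the lemma follows.
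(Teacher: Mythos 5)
Your approach is the same as the paper's, which offers no written argument at all for this lemma and simply points to the crossing configuration in Figure~\ref{fig:ThompsonAndThompson}(a); you correctly reduce the claim to exhibiting two Thompson balls whose boundaries meet in four points and correctly identify (via Lemmas~\ref{lem:ffunk-ball}--\ref{lem:funk-ball-comp}) why the shape variation needed for such a ``plus-sign'' pair is available. The only thing missing from your write-up --- explicit coordinates for $\Omega$, $p_1$, $p_2$, $r_1$, $r_2$ with the four edge intersections checked --- is equally missing from the paper, so the proposal matches the paper's level of rigor; your asides about small radii and about the square are plausible but unverified, though neither is load-bearing since the lemma only requires one bad configuration to exist.
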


\begin{lemma}A Thomspon ball of radius $r$ centered at $p$ can be nested between two Hilbert balls of radius $\frac{1}{2}r$ and $r$ centered at $p$ (see Figure~\ref{fig:ThompsonAndThompson}(b)).\end{lemma}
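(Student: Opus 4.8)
The plan is to reduce the geometric nesting claim to an elementary inequality comparing the maximum and the arithmetic mean of two nonnegative numbers. Write $B_T(p,r)$ and $B_H(p,r)$ for the Thompson and Hilbert balls of radius $r$ centered at $p$. The goal is to establish the two inclusions
\[
  B_H\!\left(p, \tfrac{r}{2}\right) ~\subseteq~ B_T(p,r) ~\subseteq~ B_H(p,r).
\]

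First I would observe that for every distinct pair $p, q \in \interior \Omega$ both $F_\Omega(p,q)$ and $rF_\Omega(p,q)$ are nonnegative: in the definition of the Funk weak metric the point $q$ lies on the segment from $p$ to $q'$, so $\|p - q'\| \ge \|q - q'\|$ and hence $F_\Omega(p,q) \ge 0$; exchanging the roles of $p$ and $q$ gives $rF_\Omega(p,q) = F_\Omega(q,p) \ge 0$. Now set $a = F_\Omega(p,q)$ and $b = rF_\Omega(p,q)$, so that $T_\Omega(p,q) = \max(a,b)$ and $H_\Omega(p,q) = \frac{1}{2}(a+b)$. For $a, b \ge 0$ we have the two-sided bound $\frac{1}{2}(a+b) \le \max(a,b) \le a+b$ — the left inequality because a maximum is at least an average, the right because both summands are nonnegative — which translates to
\[
  H_\Omega(p,q) ~\le~ T_\Omega(p,q) ~\le~ 2\,H_\Omega(p,q)
  \qquad\text{for all } p, q.
\]

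The two inclusions then follow immediately. If $H_\Omega(p,q) \le \frac{r}{2}$ then $T_\Omega(p,q) \le 2 H_\Omega(p,q) \le r$, giving $B_H(p, \frac{r}{2}) \subseteq B_T(p,r)$; and if $T_\Omega(p,q) \le r$ then $H_\Omega(p,q) \le T_\Omega(p,q) \le r$, giving $B_T(p,r) \subseteq B_H(p,r)$. Chaining these yields the claimed nesting, and Figure~\ref{fig:ThompsonAndThompson}(b) shows that in general neither containment collapses to an equality.

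I do not expect any real obstacle here; the only point requiring a little care is the sign convention that makes $a$ and $b$ nonnegative, since the bound $\max(a,b) \le a+b$ genuinely needs it, and without it the right-hand inclusion could fail. If desired, one can also remark that the constants $\frac{1}{2}$ and $1$ are the best possible: configurations in which one of $F_\Omega(p,q)$, $rF_\Omega(p,q)$ is arbitrarily small compared to the other show that the factor $2$ relating the two Hilbert radii cannot be reduced, while points with $F_\Omega(p,q) = rF_\Omega(p,q)$ (so that the Thompson and Hilbert distances agree) show the outer radius $r$ cannot be shrunk.
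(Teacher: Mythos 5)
Your proof is correct and follows essentially the same route as the paper: both arguments reduce the nesting to the elementary inequality $\frac{1}{2}(a+b) \le \max(a,b) \le a+b$ applied to $a = F_\Omega(p,q)$ and $b = rF_\Omega(p,q)$. Your version is in fact slightly more careful, since you explicitly verify the nonnegativity of the Funk distances that the bound $\max(a,b) \le a+b$ requires, a point the paper's proof leaves implicit.
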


\begin{proof}
 Given a convex body $\Omega\subset \RE^d$, for every point $p \in \Omega$ and radius $0<r$, we will show:
\[
    B_H(p,r/2) ~ \subset ~ B_T(p,r) ~ \subset ~ B_H(p,r).
\]

 To prove that $B_T(p,r)\subset B_H(p,r)$, we must show that the Thompson distance from $p$ to a point on the boundary of the ball is greater than or equal to the Hilbert. This follows directly from the definition of the Hilbert and Thompson metrics: 
\[
    H(a,b)
        ~ = ~ \frac{1}{2}(F(a,b)+F(b,a))\leq 2 \cdot \frac{1}{2}(\max(F(a,b),F(b,a))=T(a,b). 
\]
To prove that $B_H(p,r/2) \subset B_T(p,r)$, we must show that the Thompson distance from $p$ to a point on the boundary of the ball is less than or equal to the Hilbert distance. This follows from the fact that if $H(a,b)=\frac{1}{2}r$ then $\frac{1}{2}(F(a,b)+F(b,a))=\frac{1}{2}r$ so $F(a,b)+F(b,a)=r$ and so $T(a,b)=\max(F(a,b),F(b,a))\leq r$. 
\end{proof}

\begin{figure}[htbp]
    \centerline{\includegraphics[scale=0.6]{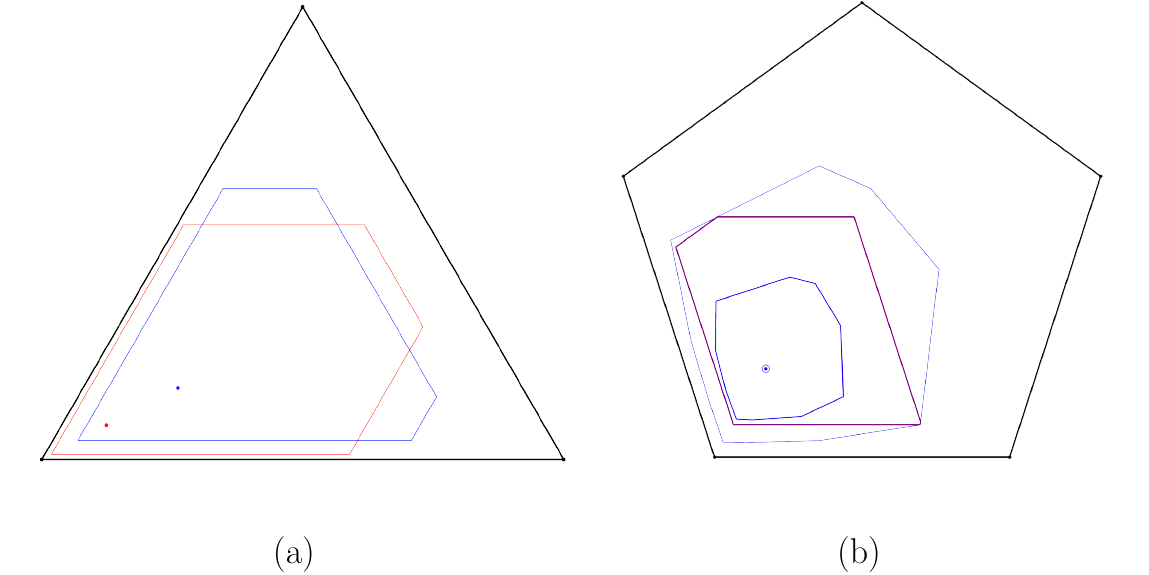}}
    \caption{(a) Thompson balls are not pseudo-disks, (b) a Thompson ball of radius $1$ between two Hilbert balls of radii $1/2$ and $1$. } \label{fig:ThompsonAndThompson}
\end{figure}

\section{Traversing the Hilbert Geometry}

\begin{figure}[htbp]
    \centerline{\includegraphics[scale=0.6]{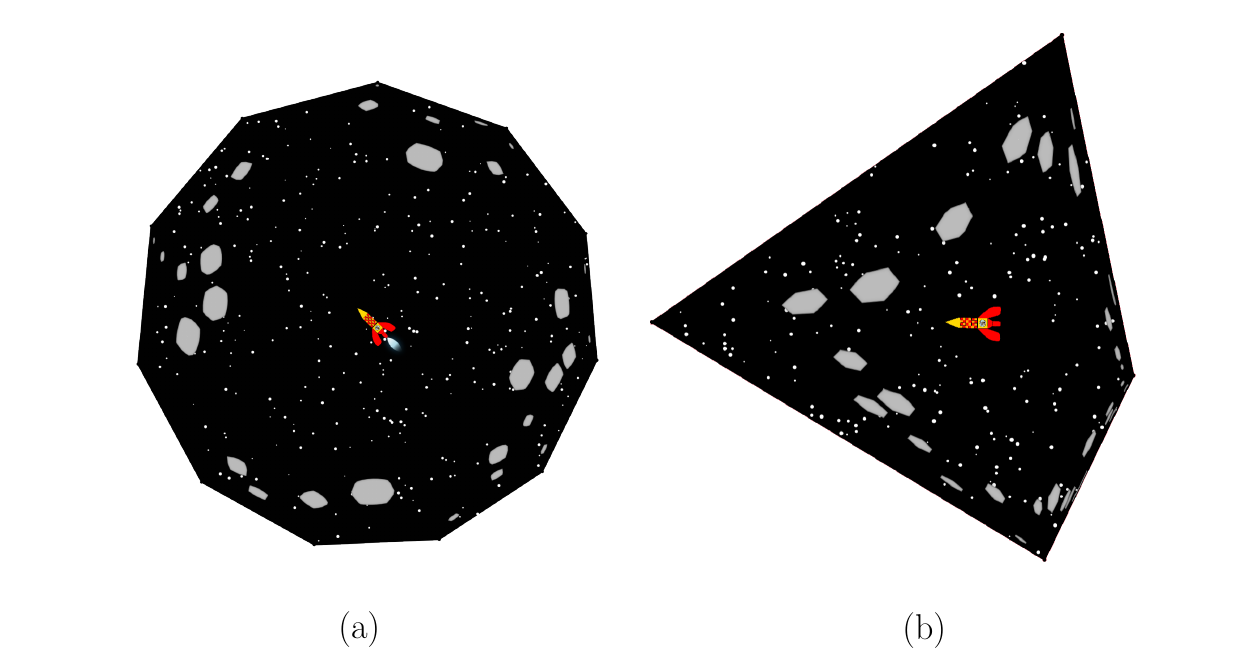}}
    \caption{(a) Travelling in a hendecagon and (b) after travelling some distance in the square. } \label{fig:HilbertTravel}
\end{figure}

We provide a visualization software for inserting Hilbert balls into a user-specified Hilbert polygonal geometry, and then traveling around that geometry (see Figure~\ref{fig:HilbertTravel}). In order to give users the ability to move in the Hilbert geometry, we implement a projective mapping that first affinely shifts $\Omega$ so that its centroid is at the origin $O$. Then, using a map $\phi_v$, we map $O+v$ to the centroid $\phi_v(\Omega)$ where $v$ is an accumulated displacement vector. This mapping is given by:
\[
    \phi_v(p)
         ~ = ~ \frac{p}{1+\langle p,v\rangle},
\]
where p is an arbitrary point\cite{izmestiev2023matching}.

However, this projective mapping often deforms $\Omega$ into very skinny forms. To handle this situation, we first capture the approximate John ellipsoid of the original shape scaled with the Mahalanobis distance. After movement, we calculate the new approximate John ellipsoid with respect to the projection of the vertices of $\Omega$. Then, we map the new John ellipsoid to the unit circle using a Cholesky decomposition. Lastly, we map the unit circle to the original approximate John ellipsoid to center the new geometry. Since all the maps we use are affine maps, they preserve the Hilbert metric. For more information on the mapping used, see the README on the GitHub repository.

\bibliography{shortcuts,hilbert}

\end{document}